\newtheorem{definition}{Definition}[section]
\newtheorem{theorem}{Theorem}[section]
\numberwithin{equation}{section}
\title{ Another Asymptotic Notation : ``Almost'' }
\author{Nabarun Mondal}
\address{Arcesium India Private Limited, India , Hyderabad }
\email{mondal@arcesium.com}
\thanks{Nabarun Mondal : 
Dedicated to my late professor Dr. Prashanta Kumar Nandi. \\ 
Dedicated to my parents. \\
Big thanks to:- Abhishek Chanda : You always have been constant support. \\
In Memory of : Dhrubajyoti Ghosh. Dear Dhru, rest in peace.
}
\author{Partha P. Ghosh}
\address{Microsoft India, Hyderabad }
\email{parthag@microsoft.com}
\thanks{ Partha. P. Ghosh : 
Dedicated to my parents and family without their presence we are nothing. \\
}
\subjclass[2010]{Primary 68W40 ; Secondary 03D15 , 68Q15 }  
\begin{document}

\keywords{
Algorithms ; Runtime ; Asymptotic Notations ; Small Oh ; Big Oh ; Big Theta ;
Order Of ; Almost    
}

\begin{abstract}
Asymptotic notations are heavily used while analysing runtimes of algorithms.
Present paper argues that some of these usages are non trivial, therefore 
incurring  errors in communication of ideas. After careful reconsideration 
of the various existing notations a new notation is proposed. This notation has
similarities with the other heavily used notations like Big-Oh, Big Theta, 
while being more accurate when describing the order relationship.
It has been argued that this notation is more suitable for describing 
algorithm runtime than Big-Oh.  
\end{abstract}

\maketitle

\begin{section}{The Problem At Hand}\label{intro}  
  
Describing the exact runtime of an algorithm \cite{ia}\cite{ahu}\cite{dk} 
is not a trivial task. Hence, the effort is directed upon finding a 
function that approximates the runtime.   
Asymptotic notations also have been heavily used in the random graph theory
\cite{rss}\cite{ijpam}\cite{im}.
To describe the approximate runtime for the algorithms, 
the concept of asymptotic notations were borrowed from Number Theory 
\cite{pb}\cite{el}\cite{cm}.

In this section we discuss the most common of the asymptotic notations, 
introduced by Bachmann-Landau \cite{pb}\cite{el},
and describe the semantic problem with the 
overly used Big-Oh notation \cite{ia}\cite{ahu}\cite{dk}\cite{cm}.

\begin{definition}\label{b-o}
\textbf{Big-Oh : O(.) }

Let $f(n)$ and $g(n)$ be functions such that :-
$$
\exists k>0 \; \exists n_0 \; \forall n>n_0 \; |f(n)| \leq |g(n)\cdot k| 
$$
then $f(n) \in O( g(n) )$ or with some abuse of notation $f(n) = O(g(n))$.
\end{definition}   
Informally this stands for $f$ is bounded above by  $g$ (up to constant factor) asymptotically.

\begin{definition}\label{s-o}
\textbf{Small-Oh : o(.) }

Let $f(n)$ and $g(n)$ be functions such that :-
$$
\forall k>0 \; \exists n_0 \; \forall n>n_0 \; |f(n)| \le k\cdot |g(n)|
$$
then $f(n) \in o( g(n) )$ or with some abuse of notation $f(n) = o(g(n))$.
\end{definition}   
Informally this stands for $f$ is dominated by  $g$ (up to constant factor) asymptotically.

\begin{theorem}\label{O-o-relation}  
\textbf{Relation between Big-O and Small-o.}

If $f \in o(g)$ then $f \in O(g)$.
\end{theorem}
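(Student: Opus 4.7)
The plan is to exploit the purely logical relationship between the two definitions. Writing them side by side, the only difference is that the definition of $o(g)$ opens with $\forall k>0$ whereas the definition of $O(g)$ opens with $\exists k>0$; the inner matrix of quantifiers and the inequality $|f(n)| \le k\cdot|g(n)|$ are identical. So the implication should reduce to the elementary fact that a universal statement entails every one of its instances, and any instance serves as an existential witness.

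Concretely, I would proceed as follows. First, assume $f \in o(g)$, i.e., for every $k>0$ there exists $n_0$ (depending on $k$) such that $|f(n)|\le k\cdot|g(n)|$ for all $n>n_0$. Second, instantiate this universal claim at a single convenient value, say $k_0 = 1$ (any positive constant would do). This yields an $n_0$ with the property that $|f(n)| \le 1\cdot|g(n)|$ for all $n>n_0$. Third, observe that the pair $(k_0, n_0)$ is exactly the kind of witness required by Definition \ref{b-o}: we have exhibited a positive constant and a threshold beyond which $|f(n)| \le k_0\cdot|g(n)|$. Hence $f \in O(g)$.

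There is essentially no technical obstacle here; the entire content is logical. The one subtlety worth flagging explicitly in the writeup is that the $n_0$ furnished by the definition of $o(g)$ depends on the chosen $k$, so once we commit to $k_0=1$ we are picking out one specific threshold rather than a uniform one. That dependence is harmless for the $O$-conclusion, since Definition \ref{b-o} only demands the existence of some pair $(k,n_0)$, not uniformity in $k$. A short remark could also note that the converse fails in general (e.g., $f=g$ gives $f\in O(g)$ but $f\notin o(g)$), which motivates introducing finer notations of the kind the paper develops next.
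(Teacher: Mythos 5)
Your proof is correct and follows the same route as the paper's: both reduce the implication to instantiating the universal quantifier over $k$ in Definition \ref{s-o} to obtain the existential witness required by Definition \ref{b-o}. Your version is actually more careful than the paper's one-line argument, since you explicitly fix $k_0=1$, track the dependence of $n_0$ on the chosen $k$, and note that the converse fails.
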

\begin{proof}
We compare the definitions \eqref{s-o} and \eqref{b-o}.
Note that the $f \in o(g) \implies \exists k $ for which  $|f(n)| \leq |g(n)\cdot k|$.
Hence the given is proved.
\end{proof}
  
This theorem \eqref{O-o-relation} creates a confusion about 
the way mathematicians use the big-oh notation $f = O(g)$
and computer engineers use them. 
Let $f \in O(g)$, then from Computer Engineering standpoint :- 
\begin{center}
\emph{ ``growth rate of f is of the order of g''} 
\end{center}
and it's precise mathematical meaning:-
\begin{center}
\emph{``growth rate of f is less than or equal to the order of g''}
\end{center}
We now demonstrate the confusion with a real example.
For any algorithm $\mathcal{A}$ let the average runtime complexity be a function 
`$A(n)$' and worst case runtime complexity be another function `$W(n)$' 
where `$n$' be the input size. 
In the  \emph{strictly mathematical} way $A \in O( W )$.
However, this becomes counter intuitive as 
the intuitive notion of big-$O$ as 
\emph{the order of the expression} as it is majorly used in Engineering. 

Example of Quick sort can be used to demonstrate the point. 
$A(n) \approx nLog_2n$ and $W(n) \approx n^2 $.
\emph{Strictly mathematically} :-
$$
n Log_2 (n) = O ( n^2 ) = O ( n^3 ) = O ( n^k ) \; ; \; k \ge 2  
$$
because 
$$
n Log_2 (n) = o ( n^2 ) = o ( n^3 ) = o ( n^k ) \; ; \; k \ge 2
$$
and $f = o(g) \Rightarrow f = O(g)$.
However, clearly the order-of $n Log_2 (n)$ is not the order of 
$n^2$ or $n^3$ etc, and the confusion arises because technically $O(g)$ family 
is not an open family of functions 
(it depicts the relation `$\le$' in some sense). 
The family of functions $o(g)$ is however an open family, 
depicting the relation `$<$' \cite{cm}\cite{dk}. 
Simply put Big-Oh is not the  \emph{of the order of}
function family.  
\end{section}

\begin{section}{Stricter Asymptotic Bounds}\label{}

This problem of confusion on notation with intuition 
can be avoided by using any stricter notion, 
which  should match with the  intuition of \emph{growth order}.
One such Bachmann-Landau notation is the Big-Theta notation 
which is described next.

\begin{definition}\label{b-t}
\textbf{Big-Theta : $\Theta$(.) }

Let $f(n)$ and $g(n)$ be functions such that :-
$$
\exists k_1>0 \; \exists k_2>0 \; \exists n_0 \; \forall n>n_0
$$
$$
g(n) \cdot k_1 \leq f(n) \leq g(n) \cdot k_2
$$
then $f(n) \in \Theta( g(n) )$ or with abuse of notation $f(n) = \Theta(g(n))$.
\end{definition}   
This stands for  $f$ is  bounded both above and below by  $g$ asymptotically.
For example given that the function $f(n) = n^3 + 1000n^2 + n + 300$ then,
$f \in o(n^5) $ as well as $f \in O(n^5) $.
But $f \in \Theta(n^3)$ so is $f \in \Theta(n^3 + n^2)$, so on and so forth.
Therefore, the notation $\Theta(.)$ allows one to be far more accurate than 
that of the more popular $O(.)$ notation, but still, not enough accurate.  
One simple example to show this would be $f(n) = 2 - sin(n)$ while $g(n) = c$
a constant function with $c > 0 $.
Clearly $f = \Theta(g)$ but, then, accuracy  is lost here.

One way to resolve this loss of accuracy is to 
find an equivalence or similar notation, 
the real \emph{on the order of} notation as defined next.

\begin{definition}\label{e}
\textbf{On the Order Of : $ \sim $ }

Let $f(n)$ and $g(n)$ be functions such that :-
$$
\forall \varepsilon>0\;\exists n_0\;\forall n>n_0\;\left|{f(n) \over g(n)}-1\right|<\varepsilon
$$
then, $f \sim g$.
\end{definition}
The problem with definition \ref{e} is that this is much of a strict bound. 
It might not be optimal to find a function $g$ that approximates $f$ 
till the exact same growth order.
For example take the function $f(n) = 3n^3 + 2n^2 + n $.
We have $g(n) = 3n^3 $ which makes $f \sim g$ but for another 
$g'(n) = n^3$ $f \not \sim g'$.
Constant factors should be ignored, in this case it is not.
We note that, however, another way of representing the same notation is:-
$$
f \sim g \implies \forall n > n_0 \; \frac{f(n)}{g(n)} \to 1
$$  
Replacing the constant $1$ with a varying parameter $K$ 
makes it into a better asymptotic notation independent of the constant factor.
But that factor, should never be 0 or $\infty$.

Therefore, we can define a new, but similar asymptotic notation \emph{``Almost''} .
\begin{definition}\label{a}
\textbf{Almost : a(K,.). }

A function $f(n)$ is almost $g(n)$ or $f \in a(K, g )$ iff:-
$$
\lim_{n \to \infty} \frac{f(n)}{g(n)} = K \; ; \;  0 < \; K < \infty 
$$ 
exists.
\end{definition}

\end{section}

\begin{section}{Properties of ``Almost'' }\label{prop}
In this section we establish properties of ``almost'' 
with theorems which relates the notation of
\emph{almost} (definition \ref{a}) with other notations.

The trivial properties first:-
\begin{enumerate}
\item { $f \in a(1,f)$ , that is $f \; a \; f$ . }
\item { $f \in a(K,g) \implies g \in a (1/K , f )$ 
which is $f \; a \; g \implies g \; f \; a$. }
\item { $f \in a(K_1,g)$ and  $g \in a (K_2 , h )$ then 
$f \in a (K_3 , h )$ with 
$K_3 = K_1 K_2$ that is 
$f \; a \; g $ and $g \; a \; h \implies f \; a \; h$. 
}
\item { $f \in a(1,g) \implies f \sim g$ . }
\end{enumerate}

\begin{theorem}\label{a-e}
\textbf{ $a(.,.)$ is Equivalence Class Relation. }

Functions related using the notation $a(.,.)$ 
(definition \ref{a}) are an equivalence class of functions. 
\end{theorem}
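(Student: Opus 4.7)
The plan is to verify the three axioms of an equivalence relation --- reflexivity, symmetry, and transitivity --- for the binary relation defined by declaring $f$ related to $g$ iff $f \in a(K, g)$ for some $K$ with $0 < K < \infty$. These three properties are precisely items (1)--(3) in the list of trivial properties stated immediately above the theorem, so the proof reduces to justifying each of them directly from Definition \ref{a}.

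First I would dispatch reflexivity by noting that the quotient $f(n)/f(n)$ is identically $1$ wherever it is defined, so the limit exists and equals $1 \in (0, \infty)$, giving $f \in a(1, f)$. For symmetry, assuming $\lim_{n \to \infty} f(n)/g(n) = K$ with $K \in (0, \infty)$, I would invoke the standard reciprocal rule for limits: since $K \neq 0$, the limit of $g(n)/f(n)$ exists and equals $1/K$, and since $K$ is finite, $1/K$ is strictly positive, so $g \in a(1/K, f)$. For transitivity, given $\lim f/g = K_1$ and $\lim g/h = K_2$, I would apply the product rule for limits to the identity $f/h = (f/g) \cdot (g/h)$, concluding that $\lim f/h = K_1 K_2$, which lies in $(0, \infty)$ since that interval is closed under multiplication; hence $f \in a(K_1 K_2, h)$.

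The main obstacle, such as it is, is purely bookkeeping: at every step one must check that the resulting constant stays strictly inside $(0, \infty)$, because Definition \ref{a} disallows both $K = 0$ and $K = \infty$. The reflexive case is automatic, but symmetry uses that $K > 0$ prevents a zero denominator while $K < \infty$ prevents $1/K = 0$, and transitivity uses the closure of $(0, \infty)$ under products. One additional implicit caveat worth flagging is that the relation is only meaningful on the class of functions for which the relevant quotients are eventually well-defined and the limits in question exist; the theorem should therefore be interpreted as asserting that $a(.,.)$ is an equivalence relation on that underlying class.
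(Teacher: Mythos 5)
Your proposal is correct and follows essentially the same route as the paper, which simply cites the first three ``trivial properties'' (reflexivity, symmetry, transitivity) as the proof. You go further than the paper by actually justifying those properties via the reciprocal and product rules for limits and by checking that the constants remain in $(0,\infty)$, which is a welcome filling-in of details the paper leaves implicit.
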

\begin{proof}
The first three trivial properties demonstrates this.
\end{proof}

\begin{theorem}\label{a-O}
\textbf{Relation of $a(.)$  with $O(.)$. }

If $f \in a(K, g)$ then $f \in O(g)$ but not vice versa.
\end{theorem}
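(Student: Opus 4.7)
The plan is to split the argument into two independent parts, corresponding to the two clauses of the statement.

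For the forward direction, I would unpack the definition of $a(K,g)$ and feed the resulting inequality into the definition of $O(g)$. Since $\lim_{n\to\infty} f(n)/g(n) = K$ with $0 < K < \infty$, the definition of limit gives, for any chosen $\varepsilon>0$, some $n_0$ with $|f(n)/g(n) - K| < \varepsilon$ for all $n > n_0$. Fixing $\varepsilon = 1$ (any fixed positive value works) and using the triangle inequality produces $|f(n)/g(n)| < K + 1$, so $|f(n)| \le (K+1)\,|g(n)|$ on this tail. Taking $k := K + 1$ in Definition \ref{b-o} then witnesses $f \in O(g)$.

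For the converse, I would exhibit a concrete counterexample where $O$ holds but $a(K,\cdot)$ fails for every admissible $K$. The natural choice is $f(n) = n$ and $g(n) = n^2$: the bound $|f(n)| \le |g(n)|$ for $n \ge 1$ shows $f \in O(g)$ with $k = 1$, whereas $\lim_{n\to\infty} f(n)/g(n) = \lim_{n\to\infty} 1/n = 0$ lies outside the open interval $(0,\infty)$ required by Definition \ref{a}. Hence $f \notin a(K,g)$ for any $K$.

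I do not expect a real obstacle here, since both halves follow mechanically from the definitions. The only minor technical points to be careful about are (i) handling absolute values correctly when passing from $|f(n)/g(n)|$ to $|f(n)|/|g(n)|$, which requires $g(n) \ne 0$ on the tail (automatic from the existence of the limit to a finite nonzero $K$), and (ii) emphasising that the counterexample works precisely because $K = 0$ is excluded from the definition of \emph{almost} — this is the whole reason $O$ is strictly weaker.
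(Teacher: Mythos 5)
Your proposal is correct, and it is substantially more complete than the argument the paper actually gives. The paper's proof of Theorem \ref{a-O} consists only of the observation that $f \in a(K,g)$ forces $0 < K$ while $f \in O(g)$ merely requires the existence of some constant $k \ge 0$; it then declares the theorem established. It never converts the limit statement into the inequality $|f(n)| \le k\,|g(n)|$ demanded by Definition \ref{b-o}, and it offers no witness for the ``not vice versa'' clause inside the proof itself (a counterexample of the same flavour, $n^2 \in O(n^3)$ with $n^2 \notin a(K,n^3)$, appears only later in the summary section). Your version supplies both missing pieces: the choice $\varepsilon = 1$ with the triangle inequality yields $|f(n)| \le (K+1)|g(n)|$ on a tail, which is exactly the Big-Oh witness $k = K+1$, and the pair $f(n) = n$, $g(n) = n^2$ cleanly shows the converse fails because the ratio tends to $0$, which Definition \ref{a} excludes. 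Your two technical cautions --- that $g(n) \ne 0$ on the tail is needed to pass from $|f/g|$ to $|f|/|g|$, and that the exclusion of $K = 0$ is precisely what makes $O$ strictly weaker --- are both well placed and are exactly the points the paper glosses over. In short, same underlying idea, but yours is the proof one would actually want to print.
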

\begin{proof}
If we have
$$
\lim_{n \to \infty} \frac{f(n)}{g(n)} = K
$$
We note that $f \in a(k,g)$ implies $ 0 < K $,
while $f \in O(g)$ implies $ \exists k \; ; \; 0 \le k $.
Which establishes the theorem.
\end{proof}

\begin{theorem}\label{at}
\textbf{Relation of $a(.)$  with $\Theta(.)$. }

\begin{enumerate}
\item { If $f \in a(K,g)$ then $f \in \Theta(g)$.}
\item { If $f \in \Theta(g)$ then it is not necessary to have $f \in a(K,g)$.}
\item { Therefore, relation $a(.)$ is a subset of $\Theta(.)$.}
\end{enumerate}

\end{theorem}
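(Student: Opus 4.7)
The plan is to handle the three parts in sequence, with parts (1) and (2) doing the real work and part (3) being an immediate corollary.

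For part (1), I would unpack the definition of the limit. From $f \in a(K,g)$ we have $\lim_{n\to\infty} f(n)/g(n) = K$ with $0 < K < \infty$, so for every $\varepsilon > 0$ there exists $n_0$ such that $|f(n)/g(n) - K| < \varepsilon$ for all $n > n_0$. The trick is to pick $\varepsilon$ small enough that both bounds stay strictly positive; the natural choice is $\varepsilon = K/2$, which gives
$$
\frac{K}{2} \;<\; \frac{f(n)}{g(n)} \;<\; \frac{3K}{2} \qquad \text{for all } n > n_0.
$$
Multiplying through by $g(n)$ (treating the runtime-style assumption that $g$ is eventually positive, so the inequalities are preserved) yields $k_1 \cdot g(n) \le f(n) \le k_2 \cdot g(n)$ with $k_1 = K/2 > 0$ and $k_2 = 3K/2 > 0$, which matches Definition \ref{b-t} exactly.

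For part (2), the cleanest route is to exhibit a concrete counter-example, and the excerpt has already handed one to us: take $f(n) = 2 - \sin(n)$ and $g(n) = c$ for some constant $c > 0$. The bounds $1 \le 2 - \sin(n) \le 3$ give $f \in \Theta(g)$ immediately. On the other hand, $f(n)/g(n) = (2 - \sin(n))/c$ oscillates between $1/c$ and $3/c$ forever, so $\lim_{n\to\infty} f(n)/g(n)$ does not exist, and therefore no $K$ can satisfy Definition \ref{a}. This shows $f \in \Theta(g)$ while $f \notin a(K,g)$ for any $K$.

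Part (3) is then just a combination: part (1) gives the inclusion $a(\cdot,g) \subseteq \Theta(g)$, and part (2) shows the inclusion is strict. I do not expect any real obstacle; the only subtle point is choosing $\varepsilon$ in a way that preserves positivity of the lower bound (hence $\varepsilon = K/2$ rather than some fixed value independent of $K$), and implicitly assuming $g$ is eventually of one sign, which is standard in the runtime-analysis context in which these definitions are being used.
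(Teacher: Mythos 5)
Your proposal follows the same overall structure as the paper's proof: unpack the limit for part (1), use the oscillating example $f(n) = 2 - \sin(n)$ against a constant $g$ for part (2), and combine for part (3). However, your treatment of part (1) is genuinely different in its details, and in fact more rigorous than the paper's. The paper asserts that ``for $n > n_0$, $f(n) = Kg(n)$'' --- which conflates the existence of the limit with eventual equality of $f$ and $Kg$ --- and then chooses $k_1 = K-1$ and $k_2 = K+1$. That choice is defective: Definition \ref{b-t} requires $k_1 > 0$, but $K - 1 \le 0$ whenever $K \le 1$ (e.g.\ $f = g$ gives $K = 1$ and $k_1 = 0$). Your choice of $\varepsilon = K/2$, yielding $k_1 = K/2 > 0$ and $k_2 = 3K/2$, repairs exactly this gap, and you are also right to flag the implicit assumption that $g$ is eventually positive so that multiplying through by $g(n)$ preserves the inequalities --- a hypothesis the paper never states but silently relies on. Parts (2) and (3) of your argument coincide with the paper's. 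In short: same route, but your version of part (1) is the one that actually works for all admissible $K$.
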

\begin{proof}
By definition we have for $n > n_0$ $f(n) = Kg(n)$.
That is, obviously $f <(K+1)g$ and clearly $f > (K-1)g$.
Which means, tallying with the definition \ref{b-t},
we have $k_1 = K - 1$ and $k_2 = K + 1$.  
Which proves the first part.

We show the second part by bringing in one example. 
Let $f(n) = 2 - sin(n)$. Clearly then $\text{sup}(f(n)) = 3$ and 
$\text{inf}(f(n)) = 1$. Obviously then we can define $g(n)=1$, 
with $k_1 = 1$ with $k_2 = 3$, so that :-
$$
k_1 g \le f \le k_2 g  
$$ 
and therefore $f \in \Theta (1)$. 
However, no limit exists for the ratio $f(n)/g(n)$ with $n \to \infty$, 
and therefore $f \not \in a(k,g)$.

These show that there are functions $f = \Theta(g)$
but $f \ne a(,g)$, hence, the relation between the family 
is depicted as $a \subset \Theta$.
\end{proof}

These theorems establish that the usage of \emph{Almost} is
more appropriate from a strict computational aspect.
In general both $f_1(n) = n^2 + n $ and $f_2(n) =  6n $ 
being $O(n^2)$ is a confusing aspect, with respect to computation.
In no sense these functions $f_1,f_2$ can be made computationally equivalent.    
While the notion of $\Theta()$ solves it in some sense.
Almost is the subset of the $\Theta()$ where a strict limit on 
problem size parameter $n \to \infty $ can be formally established.
That essentially tells that : ``\emph{at problem size infinity (asymptote) the functions 
will be costing equivalent amount of computation} ; thus, they are \emph{`asymptotically equivalent'}.
That is indeed the truest definition of asymptotic equivalence.
When that limit fails to exist, the fallback is still the  $\Theta()$,
but this time, the implication stays very clear: 

One is strictly \emph{asymptotically bounded},
by another, they are not \emph{asymptotically equivalent}. 
  
\end{section}

\begin{section}{ Summary : Advantage of using ``almost'' }\label{adv}
There is no way one can state $f \in a(K,g)$,
unless, they are comparable to the limit, as defined by definition \ref{a}.

Therefore, the biggest advantage of the proposed notation is less confusion 
in the usage of the notation, because \emph{almost} is an equivalence notation.
Take for example $n^2 \in O(n^3)$ but $n^3 \not \in O(n^2)$.
But clearly $n^2 \not \in a(K,n^3)$.
 
In the same note, one can not in general write $A = a(K,W)$ 
like in section \ref{intro} , unless, of course $A,W$ are really comparable,
as in merge-sort. In specificity, we can not use the notation $a(K,.)$ 
to compare $f = nLog_2(n)$ and $g = n^2$, in case of Quick Sort. 

Given the limit at infinity exists, \emph{almost} becomes both the tight 
upper and lower bound, that is $\Theta(.)$ from definition \ref{b-t} 
using the theorem \ref{at}.
On the other hand, if no such limit exists (example given is theorem \ref{at}), 
we can still use the original $\Theta(.)$ notation.

In summary, the notation $a(K,.)$ has advantages borrowed from all the notations,
without any of their shortcomings, as long as it can be defined. 
Also, this is not as much a tight bound compare to the 
notation $\sim$ (definition \ref{e}).

\end{section}

\end{document}